\def\be{\begin{equation}}
\def\ee{\end{equation}}
\def\ba{\begin{array}{c}}
\def\ea{\end{array}}
\newcommand{\bea}{\begin{eqnarray}}
\newcommand{\eea}{\end{eqnarray}}
\newtheorem{thm}{Theorem}
\newtheorem{lemma}[thm]{Lemma}
\newenvironment{proof}{\noindent {\bf Proof}}{\hfill$\square$\vspace{3mm}\endtrivlist}
\begin{document}

 \begin{center}{\Large \bf

Non-Hermitian Bose-Hubbard-like quantum models

  }\end{center}


 \begin{center}

\vspace{8mm}

  {\bf Miloslav Znojil} $^{1,2,3,4}$

\end{center}

\vspace{8mm}

  $^{1}$
 {The Czech Academy of Sciences,
 Nuclear Physics Institute,
 Hlavn\'{\i} 130,
250 68 \v{R}e\v{z}, Czech Republic, {e-mail: znojil@ujf.cas.cz}}


 $^{2}$
 {Department of Physics, Faculty of
Science, University of Hradec Kr\'{a}lov\'{e}, Rokitansk\'{e}ho 62,
50003 Hradec Kr\'{a}lov\'{e},
 Czech Republic}

%
%
%
%
%
%

{$^3$Institute of System Science, Durban University of
Technology, Durban,
 South Africa}


{$^4$
School for Data Science and Computational Thinking, Stellenbosch
University, 7600 Stellenbosch,
 South Africa}

\section*{Abstract}

{}

A family of multibosonic
complex-symmetric Hamiltonians
possessing both the real and complex spectra
is studied, with emphasis upon the
properties of the latter subfamily.
In it one treats
resonances as eigenstates of a non-Hermitian effective quantum Hamiltonian.
As long as
the search for their complex energy eigenvalues is not easy,
a reduced task is considered in which one only evaluates the
auxiliary real quantities called singular
values. Several forms of representation of Green's
functions in terms of (possibly, matrix) continued fractions
are shown to offer an efficient approach to this task.
%
%
%
%
%
%
%
%
%


\section{Bose-Hubbard-like Hamiltonians.\\.}

The standard three-parametric many-body Bose-Hubbard quantum Hamiltonian \cite{BHH,Graefe}
 \be
  \label{Ham1}
 H^{(BH)}(\varepsilon,v,c) = \varepsilon\left(a_1^{\dagger}a_1
 - a_2^{\dagger}a_2\right) +
  v\left(a_1^{\dagger}a_2 + a_2^{\dagger}a_1\right) + \frac{c}{2}
  \left( a_1^{\dagger}a_1 - a_2^{\dagger}a_2\right)^2
 \ee
attracts attention due to its combination of realistic aspects
({\it pars pro toto\,} let us mention that it
is of relevance for the study of Bose-Einstein condensation
processes \cite{BEC,BECb,toba3})
with formal merits (due to its numerous symmetries,
the model can be
considered, in some sense \cite{Zhang}, exactly solvable).
Interested readers may find a more extensive introductory outline of both of
these features
of the model
in~\cite{Graefe}.

\subsection{Unconventional non-Hermitian model.\\.}

A key message as delivered in the latter paper is that the authors
proposed to
replace model (\ref{Ham1})
by its
analytically continued
and manifestly non-self-adjoint unconventional Bose-Hubbard alternative,
 $$
 H_{(UBH)}(\gamma,v,c)=
 H^{(BH)}({\rm i}\gamma,v,c)
 \,. $$
In their analysis of the new model they
made ample use of analytic, numerical and
perturbation-expansion methods,
with the results which admitted
the physical interpretation
of the energy spectra in
the framework of the so called open-system theory \cite{Nimrod}
(in this setting the spectrum need not be all real)
as well as  in
the framework of the so called closed-system theory \cite{Geyer}
(in this case the spectrum {\em must\,} be all real \cite{toba1}).

In our present paper we will feel slightly more inclined to prefer the
latter framework but
as long as
our present study and
results were mainly inspired by paper \cite{Graefe},
our preferences will not be strict \cite{toba4}.

\subsection{Example.\\.}

In the model  $H^{(BH)}({\rm
i}\gamma,v,c)$ of Eq.~(\ref{Ham1}) and of Ref. \cite{Graefe}
we will mainly work with the special cases in which the
interaction between particles
(measured by the coupling constant $c$) vanishes.
For the sake of simplicity we will keep
the number of particles
${\cal N}$
unchanged (cf. \cite{toba2})
and we will fix the tunneling
sgtrength $v=1$.
Our
Hamiltonians then become represented
by the $K$ by $K$ complex tridiagonal matrices with
$K={\cal N}+1$, having the form
 \be
 H^{(2)}_{(UBH)}(\gamma)=
 \left[ \begin {array}{cc} -i{\it \gamma}&1
 \\\noalign{\medskip}1&i{\it
 \gamma}
 \end {array} \right]\,
 \label{dopp2}
 \ee
at ${\cal N}=1$, etc.
Closed formulae for the bound-state
energies will be then also available
yielding,
in the single-particle case,
elementary doublet $E_\pm = \pm \sqrt{1-\gamma^2}$, etc
(see also a few further
samples of the
explicit formulae for the ${\cal N}>1$  eigenvalues in \cite{afew}).


\section{Physical appeal of the models.\\.}

\subsection{Conventional non-Hermitian Bose-Hubbard system.\\.}

One of the immediate consequences of the exact
solvability of the above-mentioned  ${\cal N}=1$ model
is that we can easily prove that
the Hamiltonian ceases to be diagonalizable
in the limit of $\gamma \to \pm 1$.
Beyond these boundaries
known, to mathematicians, as ``exceptional points'' (EPs, \cite{Kato})
the spectrum ceases to be real,
i.e., it can only be given
an open-system interpretation.
Inside the open interval of $\gamma \in (-1,1)$,
the system can be interpreted as unitary
(see, e.g., reviews \cite{Geyer,Carl} for
a detailed explanation).
At the two EP boundaries themselves,
the model cannot be assigned any physical
meaning because the Hamiltonian ceases
to be diagonalizable \cite{ali,book}.

Explicit
proofs become particularly easy at ${\cal N}=1$
but they also remain feasible at ${\cal N}=2=K-1$, with
 \be
H^{(3)}_{(UBH)}(\gamma)=\left[ \begin {array}{ccc} -2\,i\gamma&
\sqrt{2}&0\\\noalign{\medskip}\sqrt{2}&0&
\sqrt{2}\\\noalign{\medskip}0&\sqrt{2}&2\,i\gamma\end {array}
\right]\,
  \label{3wg}
 \ee
(here we have $E_0=0$ and $E_\pm = \pm 2\,\sqrt{1-\gamma^2}$
and $\gamma^{(EP)}_\pm
=\pm 1$), etc.

\subsection{Generalized non-Hermitian Bose-Hubbard-like models.\\.}

With the growth of  ${\cal N}$ and $K$,
the related mathematics becomes less and less straightforward.
At the same time,
the
existence of the
symmetries behind the
non-Hermitian Bose-Hubbard Hamiltonians
still keeps the models solvable.
At the same time,
some of the related
phenomenological needs
(like, e.g., the
requirement of the existence of exceptional points)
appeared to be satisfied even
after a loss of some of the underlying Lie-algebraic symmetries.

For illustration
let us pick up the four-particle model
 \be
 H^{(5)}_{(UBH)}(\gamma)=\left[
 \begin {array}{ccccc} -4\,i\gamma&2&0&0&0
 \\
 \noalign{\medskip}2&-2\,i\gamma&\sqrt {6}&0&0
 \\
 \noalign{\medskip}0&\sqrt{6}&0&\sqrt {6}&0\\\noalign{\medskip}0&0&
 \sqrt {6}&2\,i\gamma&2\\\noalign{\medskip}0&0&0 &2&4 \,i\gamma\end
 {array} \right]\,
 \label{petpa}
 \ee
in which one still detects the presence of the
two exceptional points $\gamma^{EP}$ of
maximal order equal to $K=5$.

In the context of this particular and already less trivial
complex Bose Hubbard model
we found it challenging to
search, in 2018, for another, non-equivalent four-particle model
exhibiting the same degree of the
non-Hermitian EP-representing degeneracy
even after a weakening of the model's symmetries.
Not quite expectedly, we succeeded \cite{4a5,4a5b}.
Using some
brute-force direct-construction methods
we managed to construct
 \be
 H^{(5)}_{(non-BH)}(\gamma)=\left[
 \begin {array}{ccccc} -4\,i\gamma&8&0&0&0
 \\
 \noalign{\medskip}8&-2\,i\gamma&i\sqrt {54}&0&0
 \\
 \noalign{\medskip}0&i\sqrt{54}&0&i\sqrt {54}&0\\\noalign{\medskip}0&0&
 i\sqrt {54}&2\,i\gamma&8\\\noalign{\medskip}0&0&0 &8&4 \,i\gamma\end
 {array} \right]\,
 \label{zpetpa}
 \ee
i.e., a new complex-symmetric model
of a non-Bose-Hubbard {\it alias\,} Bose-Hubbard-like type.
In this manner, the questions connected with the
evaluation of the spectrum
entered the scene.
As a consequence, our attention
has been redirected to the numerical techniques
and, in particular, to the possibilities of the localization of
the energies via the poles of Green's functions \cite{MCF1}.


\section{Green's functions.\\.}

\subsection{Tridiagonal Hamiltonians and analytic continued fractions.\\.}

Among the most user-friendly complex symmetric matrix models
we have to mention, in the context of numerical methods,
the fairly general class of tridiagonal non-Hermitian quantum
Hamiltonians
 \be
 H=
  \left[ \begin {array}{ccccc}
     a_1&b_1&0
 &0&\ldots
   \\
   c_2&a_2&b_2&0&\ddots
   \\
 0
 &c_3&a_3&b_3&\ddots
   \\
 \vdots&\ddots&\ddots&\ddots&\ddots
    \\
 \end {array} \right]\ \
 \neq
 H^\dagger =
  \left[ \begin {array}{ccccc}
     a_1^*&c_2^*&0
 &0&\ldots
   \\
   b_1^*&a_2^*&c_3^*&0&\ddots
   \\
 0
 &b_2^*&a_3^*&c_4^*&\ddots
   \\
 \vdots&\ddots&\ddots&\ddots&\ddots
    \\
 \end {array} \right]\,,
 \label{trufinkit}
 \ee
the most important mathematical property of which
is the possibility of expressing their
analytic Green's functions
in terms of continued fractions,
 \be
 {\mathcal G}(z)=\frac{1}{
 a_1-z-\frac{b_1c_2}{a_2-z-\frac{b_2c_3}{a_3-z-\ldots}} }\,.
 \label{Virendra}
 \ee
The approach
found a strong motivation in quantum physics
by its property of offering, in many models, a {convergent resummation} of
perturbation series \cite{Singh}. In the language of mathematics,
the basic ideas of such an approach are straightforward:
The tridiagonal eigenvalue problem
 \be
  \left( \begin {array}{ccccc}
     a_1&b_1&0
 &\ldots&0
   \\
   c_2&a_2&b_2&\ddots
 &\vdots
   \\
 0
 &\ddots&\ddots&\ddots&0
   \\
 \vdots&\ddots&c_{N-1}&a_{N-1}&b_{N-1}
    \\
  0&\ldots&0&c_{N}&a_{N}
    \\
 \end {array} \right)\, \left( \begin {array}{c}
 \psi_1\\
 \psi_2\\
 \vdots\\
 \psi_N\\
 \end {array} \right)
 =
 E
 \,\left( \begin {array}{c}
 \psi_1\\
 \psi_2\\
 \vdots\\
 \psi_N\\
 \end {array} \right)\,,
 \ \ \ \ N \to \infty\,
 \label{SEfinkit}
 \ee
is simply reformulated using factorization ansatz
 \be
 H-E
 = {\cal U}\,
 {\cal F}\,
 {\cal L}\,.
 \label{finkit}
 \ee
Here, it makes sense to require that $\det  {\cal U}= \det  {\cal L}=1$,
and that
the central factor matrix ${\cal F}$ remains diagonal, with elements
 $$
 1/f_1,  1/f_2, \ldots,  1/f_N\,.
 $$
This enables one to abbreviate ${u}_{k+1}=-b_kf_{k+1}$
and ${v}_j=-f_jc_j$ and define
the other two matrix factors in the respective bidiagonal forms
 \be
 {\cal U}=
  \left[ \begin {array}{ccccc}
  1&-u_2&0
 &\ldots&0
   \\
     0&1&-u_3&\ddots
 &\vdots
   \\
 0
 &0&\ddots&\ddots&0
   \\
 \vdots&\ddots&\ddots&1&-u_N
    \\
  0&\ldots&0&0&1
    \\
 \end {array} \right]\,,
 \ \ \ \ \ \
 {\cal L}=
  \left[ \begin {array}{ccccc}
  1&0&0
 &\ldots&0
   \\
     -v_2&1&0
 &\ldots&0
   \\
   0&-v_3&\ddots&\ddots
 &\vdots
   \\
 \vdots&\ddots
 &\ddots&1&0
   \\
  0&\ldots&0&-v_{N}&1
    \\
 \end {array} \right]\,.
 \label{lowkit}
 \ee
Non-linear continued-fraction recurrences
 \be
 f_k=\frac{1}{a_k-E-b_kf_{k+1}c_{k+1}}\,,\ \ \ \
 k=N, N-1,\ldots,2 ,1 \,
 \label{cf}
 \ee
now emerge
as a condition of the validity of the factorization ansatz.
At any finite matrix dimension $N$ it is merely sufficient
to put
$f_{N+1}=0$ and {$1/f_1=0$}.

\subsection{Proofs of convergence using the fixed-point-attraction technique.\\.}

In many realistic models with $N \gg 1$ or  $N=\infty$ (cf., e.g., \cite{Singh})
one needs to prove the existence of the Green's-function-representing
continued-fraction expression {$1/f_1=1/f_1(z)$} (cf. also Eq.~(\ref{Virendra}) above).
For our present purposes
we propose to prefer the  fixed-point philosophy of
Ref.~\cite{FP}.
In order to remind the readers about the idea, let us
consider just an instructive
complex  symmetric
$N=\infty$  model
 \be
 H=
  \left[ \begin {array}{cccc}
     \beta_1+i\gamma_1&\alpha_1&0
 &\ldots
   \\
   \alpha_1&\beta_2+i\gamma_2&\alpha_2&\ddots
   \\
 0
 &\alpha_2&\beta_3+i\gamma_3&\ddots
   \\
 \vdots&\ddots&\ddots&\ddots
    \\
 \end {array} \right]\,
 \label{vfinkit}
 \ee
in which we set, for the sake of brevity, $\gamma_k =
0$. Moreover, we will assume that
in the continued-fraction recurrences (\ref{cf})
the $k-$dependence can be scaled out
at $k \gg 1$
so that we can only consider their simplified,
$k-$independent form
 \be
 f'=1/(\beta-E-\alpha^2f)\,.
 \label{fpcf}
 \ee
After another re-scaling we may fix
$\alpha = 1/\sqrt{2}$, set
$\beta=\sqrt{2}(1+\delta^2)$, and neglect $E$.

In the resulting trivial recurrence
 \be
 f'=\frac{2}{2\beta-f}\,
 \label{ipro}
 \ee
one reveals the existence of two fixed points,
 \be
 f_{FP}^{(\pm)}=\beta\pm \sqrt{\beta^2-2}\,.
 \label{tyfps}
 \ee
This means that
$[f_{FP}^{(\pm)}]^2/2=1 \pm \sqrt{2}|\delta| +{\cal O}(\delta^2)$
so that  $f \to f_{FP}^{(-)}=1/\beta+
corrections\ $ since
 \be
 \left .\frac{\partial f'}{\partial f}\right |_{f=f_{FP}^{}}
 =\frac{1}{2}\,
 \left (f_{FP}^{}\right )^2 \,,
 \ \ \ \
 \left |\frac{\partial f'}{\partial f}\right | < 1\,.
 \ee
From these formulae, the proof or disproof of convergence immediately follows.

\section{Singular values.\\.}

Once one admits, in the next step, that $\gamma_k \neq
0$, the simplicity of the proof gets lost.
There are two ways out of the dead end. Either one
gives up the simplicity of the proof,
or one gets inspired by the Pushnitzky's and \v{S}tampach's
abstract mathematical ideas \cite{PS} and
turns attention from the user-unfriendly complex eigenvalues
to the techniques of evaluation of the
real alternative
characteristics of the system called singular values.

 \subsection{The trick of Hermitization.\\.}

Singular values $\sigma_n$ of an operator  $H$
(i.e., say, of our quantum Hamiltonian (\ref{trufinkit}))
can be defined as the non-negative
square roots of eigenvalues $\sigma_n^2$ of the operator product
$H^\dagger H$ or, equivalently, of its
``Hermitized'' partitioned descendant
 \be
  \widetilde{\mathbb H}=
  \left (\begin{array}{c|c}
  0&H\\
  \hline
  H^\dagger&0
  \end{array}
  \right )\,.
  \label{desce}
  \ee
In a way recommended by Pushnitzky with \v{S}tampach
\cite{PS}, a decisive simplification
of the evaluation of the singular values of
our tridiagonal Hamiltonians (\ref{trufinkit})
can be achieved by a suitable
permutation of the basis
using an intertwining relation
 \be
 \widetilde{\mathbb H}\,\mathbb{V}=\mathbb{V}\,\mathbb{H}
 \ee
where
 \be
 \mathbb{V}=
  \left[ \begin {array}{cccccccc}
    1&0&0&0&\ldots&&&
   \\
    0&0&1&0&0&\ldots&&
   \\
    0&0&0&0&1&0&\ldots&
   \\
    \vdots&\ddots&&&\ddots&\ddots&\ddots&
   \\
   \hline
    0& 1&0&0&0&\ldots&&
   \\
    0&0&0&1&0&0&\ldots&
   \\
   0& 0&0&0&0&1&0&\ldots
   \\
   \vdots& \ddots&&&&\ddots&\ddots&\ddots
 \end {array} \right]\,.
 \label{triceps}
  \ee
This leads to the
isospectral partner
${\mathbb H}$
of matrix (\ref{desce})
which is {block-tridiagonal},
 \be
 {\mathbb H}=
  \left[ \begin {array}{cccc}
     A_1&B_1&0
 &\ldots
   \\
   C_2&A_2&B_2&\ddots
   \\
 0
 &C_3&A_3&\ddots
   \\
 \vdots&\ddots&\ddots&\ddots
    \\
 \end {array} \right]\,.
 \label{matrix}
 \ee
Another,
serendipitious advantage
of the permutation is that
the new two-by-two submatrices
of
${\mathbb H}$
are all {sparse},
 \be
 A_k=
\left (
\begin{array}{cc}
0&a_k\\
a_k^*&0 \ea \right )\,,\ \ \ \ \ B_k= \left (
\begin{array}{cc}
0&b_k\\
c_{k+1}^*&0 \ea \right )\,,\ \ \ \ \ C_{k+1}=B_k^\dagger= \left (
\begin{array}{cc}
0&c_{k+1}\\
b_k^*&0 \ea \right )\,.
 \label{[14]}
 \ee
Now, it is
easy to see that
in full analogy with our preceding considerations,
the related Green's functions
can be expressed in terms of
the expansions called matrix continued fractions
(interested readers may find more details, e.g., in \cite{MCF2}).

\subsection{The most elementary example.\\.}

In the majority of potential realistic applications of the
complex symmetric Hamiltonians
the matrix dimensions will be large so that
the experimentally testable predictions will have to be evaluated using
numerical (and, in particular, the present continued-fraction-based)
techniques.
Still, a useful methodical complement of
all of these calculations
will certainly be provided
by the non-numerically tractable models with small
matrix dimensions $N$ -- finite and,
in our present notation and for
the specific Bose-Hubbard-like models, equal to $K$.

In this manner we may return to $K=2$ and
reinterpret the coupling constant $\gamma$ in Eq.~(\ref{dopp2})
as a variable parameter $t$ mimicking time in
abbreviated $H^{(K)}(\gamma)=X(t)$.
Finally we may
recall the known
diagonalized form of the $K=2$ Hamiltonian,
 $$
 {X}(t) \to
 \mathfrak{\xi}(t)=\left[ \begin {array}{cc} \sqrt {1-{t}^{2}}&0\\
  \noalign{\medskip}0&-\sqrt {1-{t}^{2}}\end {array} \right]\,.
 $$
The elementary nature of the model enables us to
construct the explicit form of the product
 $$
 \mathbb{{X}}(t)={X}^\dagger(t)\,{X}(t)
 =\left[ \begin {array}{cc} {t}^{2}+1&2\,it\\
 \noalign{\medskip}-2\,it&{t}^{2}+1\end {array} \right]\,.
 $$
The singular values then follow from an easy
diagonalization of the matrix,
  $$
  \widetilde{\mathbb{{X}}}(t)
  =\left[ \begin {array}{cc} {t}^{2}+1+2\,t&0\\
  \noalign{\medskip}0&{t}^{2}+1-2\,t\end {array} \right]\,.
  $$
Interested readers may find more details in \cite{afew}.

\subsection{Further instructive few-state models.\\.}

A transition between the trivial and non-trivial
models emerges, at $N=K=4$, with Hamiltonian
 \be
 {X}(t)= \left[ \begin {array}{cccc}
  -3\,it&\sqrt {3}&0&0\\\noalign{\medskip}
 \sqrt {3}&-it&2&0\\\noalign{\medskip}0&2&it&\sqrt {3}
 \\
 \noalign{\medskip}0&0&\sqrt {3}&3\,it\end {array} \right]\,.
 \label{bufom}
 \ee
Its
diagonalized
partner with elements
 \be
 \mathfrak{\xi}_{11}(t)=\sqrt {1-t^2}=-\mathfrak{\xi}_{44}(t)\,,\ \ \
 \mathfrak{\xi}_{22}(t)=3\,\sqrt {1-t^2}=-\mathfrak{\xi}_{33}(t)\,
 \label{brac4}
 \ee
is still easy to deduce. In parallel, a more interesting matrix structure
is encountered in product
 $$
 \mathbb{{X}}(t)={X}^\dagger(t)\,{X}(t)=
  \left[ \begin {array}{cccc} 9\,{{t}}^{2}+3&2\,i{t}\sqrt {3}&2\,\sqrt {3}&0
\\\noalign{\medskip}-2\,i\sqrt {3}{t}&7+{{t}}^{2}&4\,i{t}&2\,\sqrt {3}
\\\noalign{\medskip}2\,\sqrt {3}&-4\,i{t}&7+{{t}}^{2}&2\,i{t}\sqrt {3}
\\\noalign{\medskip}0&2\,\sqrt {3}&-2\,i\sqrt {3}{t}&9\,{{t}}^{2}+3
\end {array} \right]
 $$
with
 eigenvalues
  $$
  \sigma_{\pm,\pm}(t)=
  5+5\,{{t}}^{2} \pm 2\,{t} \pm 4\,\sqrt {1-{t}-{{t}}^{3}+{{t}}^{4}}\,.
  $$
The analysis of the consequences of the further growth
of the matrix dimension can be found in the dedicated references \cite{4a5b}.
We only find it relevant to emphasize that the $N=6$ model
 \be
 {X}(t)= \left[ \begin {array}{cccccc} -5\,i{t}&\sqrt {5}&0&0&0&0\\
 \noalign{\medskip}\sqrt {5}&-3\,i{t}&2\,\sqrt {2}&0&0&0
\\\noalign{\medskip}0&2\,\sqrt {2}&-i{t}&3&0&0\\\noalign{\medskip}0&0&3&
i{t}&2\,\sqrt {2}&0\\\noalign{\medskip}0&0&0&2\,\sqrt {2}&3\,i{t}&\sqrt {5
}\\\noalign{\medskip}0&0&0&0&\sqrt {5}&5\,i{t}\end {array} \right]
 \label{bufor}
 \ee
still leads to the compact form of eigenvalues, yielding elements
$$
\mathfrak{\xi}_{11}(t)=5\,\sqrt {1-{t}^2}=-\mathfrak{\xi}_{66}(t)\,,\ \ \
\mathfrak{\xi}_{22}(t)=3\,\sqrt {1-{t}^2}=-\mathfrak{\xi}_{55}(t)\,,\ \ \
\mathfrak{\xi}_{33}(t)=\sqrt {1-{t}^2}=-\mathfrak{\xi}_{44}(t)\,
$$
of the
diagonalized Hamiltonian. In contrast, the formulae for the
related singular values, albeit explicit, already cease to be short and
easy to print.

 \noindent
\section{Matrix continued fractions.\\.}

Whenever one feels forced to
move to the use of numerical methods,
a return to their continued-fraction forms
can appear to be of a true relevance.
In particular, the above-mentioned
loss of transparency of
the singular-value formulae
in the
not yet too complicated
five-particle setting
indicates that
a return to the use of matrix
continued fractions might be truly well motivated.

\subsection{Standard (though, possibly, infinite-dimensional) matrix models.\\.}

For the purpose we may
start from a partitioned
factorization ansatz
 \be
 \mathbb{H}-\sigma
 = {\mathbb U}\,
 {\mathbb F}\,
 {\mathbb L}\,\,
 \label{alfinkit}
 \ee
in which ${\mathbb F} $ is block-diagonal, with two-by-two
elements
 $
 1/F_j  $.
 Once we follow
our preceding considerations we immediately arrive at
the two-by-two matrix continued fraction recurrences
 \be
 F_k=\frac{1}{A_k-\sigma-B_k F_{k+1}C_{k+1}}\,,\ \ \ \
 k=N, N-1,\ldots,2 ,1\,
 \label{macf}
 \ee
with
$F_{N+1}=0$ at  $N \leq \infty$.
Needless to add that the ultimate secular equation
 \be
 \det F_1^{-1}(E_n)=0\,
 \ee
remains user-friendly, and that also
a non-numerical construction
of the block-bidiagonal factor matrices
preserves the analogies, with
 \be
 {\mathbb U}=
  \left[ \begin {array}{cccc}
  I&-U_2&0
 &\ldots
   \\
     0&I&-U_3&\ddots
   \\
 0
 &0&I&\ddots
   \\
 \vdots&\ddots&\ddots&\ddots
    \\
 \end {array} \right]\,,
 \ \ \ \ \ \
 {\mathbb L}=
  \left[ \begin {array}{cccc}
  I&0&0
 &\ldots
   \\
     -V_2&I&0
 &\ldots
   \\
   0&-V_3&I&\ddots
   \\
 \vdots&\ddots
 &\ddots&\ddots
   \\
 \end {array} \right]\,
 \label{blowkit}
 \ee
leading to the
opportunity of constructing also wave functions (cf. \cite{MCF1}).

\subsection{Doubly-infinite matrix models.\\.}

Skipping the details (which can be found in \cite{afew})
let us now move to the climax of our present
considerations and mention the
possibility of encountering and using a doubly infinite partitioned
analogue of the conventional singular-value-related matrices (\ref{matrix}), viz.,
 \be
 {\mathbb H}=
  \left[ \begin {array}{ccc|c|ccc}
 \ddots&\ddots&\ddots&\vdots&&
    &\\
   \ddots&A_{-2}&B_{-2}&0&\vdots&
    &\\
   \ddots&C_{-1}&A_{-1}&B_{-1}&0&\ldots
    &\\
    \hline
  \ldots &0&C_0&A_0&B_0&0&\ldots
    \\
    \hline
    &\ldots
  &0& C_1&A_1&B_1&\ddots
   \\&& \vdots
 &0
 &C_2&A_2&\ddots
   \\&&
 &\vdots&\ddots&\ddots&\ddots
     \\
 \end {array} \right]\,.
 \label{bimatrix}
 \ee
It is worth emphasizing that there is a pair of
matrix continued fractions
 \be
 F_{-j}=\frac{1}{A_{-j}-\sigma-C_{-j}F_{-j-1}B_{-j-1}}\,,\ \ \ \
 j=M, M-1,\ldots,2 ,1
 \label{biucf}
 \ee
(with $F_{-M-1}=0$) and
 \be
 F_k=\frac{1}{A_k-\sigma-B_kF_{k+1}C_{k+1}}\,,\ \ \ \
 k=N, N-1,\ldots,2 ,1
 \label{bilcf}
 \ee
(where $F_{N+1}=0$) which
enters the game in such a case.

\begin{lemma}. \label{lemmautwo}
The Green's function associated with the Hermitian
quasi-Hamiltonian (\ref{bimatrix})
can be defined by formula ${G}(z)=\det F_0(z)$, with
 \be
 F_0(z)=\left [{A_{0}-z-C_{0}F_{-1}(z)B_{-1}-B_0F_{1}(z)
 C_{1}}\right ]^{-1}\,.
 \label{humacf}
 \ee
\end{lemma}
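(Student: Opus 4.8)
The plan is to read formula (\ref{humacf}) as the assertion that $F_0(z)$ is precisely the central $2\times 2$ diagonal block of the resolvent $(\mathbb{H}-z)^{-1}$ of the doubly-infinite matrix (\ref{bimatrix}), and then to obtain the scalar Green's function as its determinant. First I would partition $\mathbb{H}-z$ with respect to the distinguished central index $0$ into three pieces: the upper-left semi-infinite wing $\mathbb{H}_{<}$ carrying the blocks with $k\le -1$, the isolated central block $A_0-z$, and the lower-right semi-infinite wing $\mathbb{H}_{>}$ carrying the blocks with $k\ge 1$. In this partition the only couplings leaving block $0$ are $B_0$ (towards block $1$) and $C_0$ (towards block $-1$), while the couplings returning are $C_1$ and $B_{-1}$; the remaining structure of $\mathbb{H}$ lives entirely inside the two wings, which are themselves block-tridiagonal and mutually decoupled. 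Because the blocks obey $C_{k+1}=B_k^\dagger$ (cf. (\ref{[14]})), the matrix $\mathbb{H}$ is Hermitian, so for any non-real $z$ both wings are boundedly invertible and the reduction that follows is legitimate.

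The key step is a Feshbach--Schur (block-$LU$) elimination of the two wings. Since $\mathbb{H}_{<}$ and $\mathbb{H}_{>}$ do not couple to one another, the Schur complement of the central block is diagonal-block--local, and the $(0,0)$ block of $(\mathbb{H}-z)^{-1}$ equals the inverse of
\[
A_0-z-C_0\,[(\mathbb{H}_{<}-z)^{-1}]_{-1,-1}\,B_{-1}-B_0\,[(\mathbb{H}_{>}-z)^{-1}]_{1,1}\,C_1\,.
\]
It then remains only to identify the two corner blocks $[(\mathbb{H}_{<}-z)^{-1}]_{-1,-1}$ and $[(\mathbb{H}_{>}-z)^{-1}]_{1,1}$ with the continued-fraction objects $F_{-1}(z)$ and $F_{1}(z)$. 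This I would do by invoking the factorization ansatz (\ref{alfinkit}) together with the block-bidiagonal factors (\ref{blowkit}) already established above: for a semi-infinite block-tridiagonal matrix the leading diagonal block of the central factor $\mathbb{F}$ is exactly the corner block of the resolvent, and the recurrence (\ref{macf}) is precisely the consistency condition that produces it. Applied to the lower wing this gives $F_1$ via (\ref{bilcf}); applied to the upper wing, with the index running in the opposite direction, it gives $F_{-1}$ via (\ref{biucf}). Substituting the two identifications turns the Schur complement into $A_0-z-C_0F_{-1}B_{-1}-B_0F_1C_1$, whose inverse is the claimed $F_0(z)$ of (\ref{humacf}). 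Taking the determinant collapses this central $2\times 2$ resolvent block to the scalar $G(z)=\det F_0(z)$, whose poles — occurring exactly where $F_0^{-1}(z)$ is singular — coincide with the eigenvalues of $\mathbb{H}$, i.e. with the signed singular values of the underlying non-Hermitian Hamiltonian; this is the block analogue of the secular equation $\det F_1^{-1}(E_n)=0$.

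The main obstacle I anticipate is not the algebra of the finite truncations, which is routine Gaussian-elimination bookkeeping, but the passage to the genuinely doubly-infinite limit $M,N\to\infty$. There one must (i) guarantee that the two one-sided continued fractions $F_{\pm 1}(z)$ actually converge, which for the relevant $z$ is to be secured by the fixed-point-attraction argument recalled above for the scalar case (Ref.~\cite{FP}), now carried out for the $2\times 2$ recurrences (\ref{biucf})--(\ref{bilcf}); and (ii) confirm that the truncated corner resolvent blocks converge to those of the infinite wings, so that the limiting $F_{\pm 1}(z)$ are genuinely the corner blocks of $(\mathbb{H}_{\lessgtr}-z)^{-1}$ rather than merely formal limits. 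For $z$ off the real axis the Hermiticity of $\mathbb{H}$ makes the invertibility of the wings automatic and bounds the resolvents uniformly, so that strong resolvent convergence of the truncations reduces (ii) to (i). Once the convergence of the matrix continued fractions is in hand, the Schur reduction and hence the determinantal representation $G(z)=\det F_0(z)$ follow immediately.
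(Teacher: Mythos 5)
Your argument is correct and is essentially the route the paper intends: the paper itself offers no proof beyond a citation of \cite{MCF2}, but its factorization ansatz (\ref{alfinkit})--(\ref{blowkit}) is exactly the block-$UDL$ elimination that your Feshbach--Schur reduction of the two decoupled wings reproduces, with the recurrences (\ref{biucf}) and (\ref{bilcf}) arising, just as you say, as the corner blocks of the wing resolvents, so that the Schur complement of the central block inverts to (\ref{humacf}). Your closing remarks on the convergence of the one-sided matrix continued fractions for non-real $z$ (via Hermiticity of the wings and uniform resolvent bounds) address the only point the paper leaves entirely implicit.
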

\begin{proof}
can be found in  \cite{MCF2}.
\end{proof}

\section{Summary.\\.}

Let us conclude that
along the lines which were sampled heere by the fairly realistic
Bose-Hubbard-like quantum models,
many
non-Hermitian tridiagonal Hamiltonians $H \neq H^\dagger$
may be found user-friendly. This seems to be true
even in the dynamical regime of
resonances
for which
the
singular values
were shown to be specified via  auxiliary, ``Hermitized'' Schr\"{o}dinger-like
equations.
In addition, the related
``Hermitized'' Green's functions
were given the two alternative compact and numerically efficient
matrix continued fraction forms.

%
%
%
%

\end{document}